\newtheorem{thm}{Theorem}[]
\newtheorem{lem}[thm]{Lemma}
\newtheorem*{defn*}{Definition}
\newtheorem*{rem*}{Remark}
\newtheorem{prop}[thm]{Proposition}
\newtheorem{cor}[thm]{Corollary}
\newtheorem*{claim*}{Claim}
\newcommand{\Prb}{\mathbb{P}}
\newcommand{\Bin}[2]{\left(\begin{array}{c} #1 \\ #2 \end{array}\right)}
\newcommand{\Z}{\mathbb{Z}}
\newcommand{\R}{\mathbb{R}}
\begin{document}

\title{Rigorous Confidence Intervals on Critical Thresholds in 3 Dimensions
}


\author{Neville Ball}




\maketitle

\begin{abstract}
We extend the method of Balister, Bollob\'{a}s and Walters {[\ref{MW}]} for determining rigorous confidence intervals for the critical threshold of two dimensional lattices to three (and higher) dimensional lattices. We describe a method for determining a full confidence interval and apply it to show that the critical threshold for bond percolation on the simple cubic lattice is between $0.2485$ and $0.2490$ with $99.9999\%$ confidence, and the critical threshold for site percolation on the same lattice is between $0.3110$ and $0.3118$ with $99.9999\%$ confidence.
\end{abstract}

\section{Introduction}
We say that a graph that has a drawing that corresponds to a regular tiling of $\R^{d}$ is an (infinite) $d$-dimensional lattice. For any such lattice $\Lambda$, we define the independent bond percolation model on $\Lambda$ by assigning each edge of $\Lambda$ a state, either open or closed, independently with probability $p$. Percolation theory then seeks to answer the question: `when is there an infinite open component (i.e. an infinite collection of vertices connected via open edges)?' Kolmogorov's 0-1 law tells us that there exists a critical threshold $p_{c}$ such that for $p<p_{c}$ the probability of having an infinite component is $0$, while if $p>p_{c}$ the probability of having at least one infinite open component is $1$ (see e.g. Chapters~1 and 3 of Grimmet [\ref{GG}]).

However, apart from a handful of simple two dimensional lattices, the exact value of $p_{c}$ is not known, and indeed for many lattices it may not be possible to know the value of $p_{c}$ exactly, as there is no reason to believe that they are algebraic numbers, or even easily expressible. As a result, work has been put into estimating critical probabilities. Until recently this was typically done in one of two ways, either as a fully rigorous upper or lower bound, or as a heuristic estimate based on computer simulations.

The former generally give a fairly wide interval, e.g. the best known bounds for site percolation on the square lattice, even after a lot of work (e.g. Toth [\ref{Toth}], Luczak and Wierman [\ref{LuAndWie}] and Zuev [\ref{Zuev}]) are $p_{c}\geq 0.556$, due to van den Berg and Ermakov [\ref{BergAndErm}], and $p_{c}\leq 0.679492$, due to Wierman [\ref{Wierman}].

The heuristic estimates, on the other hand, claim a very high level of accuracy, e.g. Newman and Ziff, in [\ref{Z'}], claim the critical threshold for site percolation on the simple cubic lattice is $0.59274621\pm 1.3\times 10^{-8}$. However, it is hard to be sure of how accurate these estimates really are, particularly since most of the more accurate estimates assume unproven results about scaling theory e.g. see Ziff and Neumann [\ref{Z2}].

In [\ref{BS}] Bollob\'{a}s and Stacey (in the context of oriented percolation) and then Balister, Bollob\'{a}s and Walters in [\ref{MW}] (for continuum percolation) introduced a method which is intermediate between the heuristic estimates and the rigorous bounds. They were able to produce a rigorous confidence interval for critical thresholds in two dimensions, i.e. they were able to rigorously prove that with a certain confidence the critical threshold lay within a certain (narrow) interval. E.g. Using this method Riordan and Walters proved in [\ref{MW}] that with $99.9999\%$ confidence the critical threshold for site percolation on the square lattice was between $0.5925$ and $0.5930$. However the method they used does not generalise to lattices in any dimension higher than two.

In Section~2 we describe the method of Balister, Bollob\'{a}s and Walters for finding a rigorous confidence interval for a two dimensional lattice, in Section~3 we extend the method to three and higher dimensions and apply it in Section~4 to prove:

\begin{thm}
On the simple cubic lattice:
\begin{itemize}
\item $[0.2485,0.2490]$ is a $99.9999\%$ confidence interval for the critical threshold in bond percolation.
\item $[0.3110,0.3118]$ is a $99.9999\%$ confidence interval for the critical threshold in site percolation.
\end{itemize}
\end{thm}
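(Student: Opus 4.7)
The plan is to apply the general three-dimensional framework of Section~3 separately to each of the four one-sided bounds that together yield the two intervals. For each bound we fix a parameter $p$ (namely $0.2485$, $0.2490$, $0.3110$ and $0.3118$) and reduce the assertion ``$p < p_c$'' or ``$p > p_c$'' to a statement of the form ``a certain event $E$ on a finite box $B \subset \Z^3$ has probability at least, or at most, $\alpha$ at parameter $p$''. The event is engineered so that, when it occurs on many translates of $B$ tiling $\Z^3$, the induced process on the coarsened lattice is a 1-dependent site (or bond) percolation whose marginal probability comfortably crosses the rigorous 1-dependent threshold; this reduction is deterministic and is the content of Section~3. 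The probability $\Prb_p(E)$ is then estimated by independent Monte Carlo simulation of configurations on $B$, and a Chernoff-type bound on the resulting binomial count converts the point estimate into a one-sided confidence statement of level better than $10^{-7}/4$, so that a union bound over the four sides gives overall confidence at least $1-10^{-6}$.

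For the two upper bounds, at $p=0.2490$ (bond) and $p=0.3118$ (site), I would take the good event $E^{+}$ to assert that there is a connected open structure inside $B$ meeting six prescribed face-regions, chosen so that good events on two adjacent translates of $B$ force the corresponding clusters to merge through the shared face. Simulation must then show that $\Prb_p(E^{+})$ lies strictly above the appropriate 1-dependent threshold with overwhelming confidence, and a Liggett--Schonmann--Stacey type stochastic domination by Bernoulli percolation forces the existence of an infinite cluster on $\Z^3$. For the two lower bounds, at $p = 0.2485$ and $p = 0.3110$, I would symmetrically consider a bad event $E^{-}$ that asserts the existence of a long open path in $B$ from a central region out to the boundary; showing $\Prb_p(E^{-})$ is sufficiently small rules out an infinite cluster via exponential decay on the coarsened lattice, giving $p \le p_c$. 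In each case the deterministic half of the argument --- that good (respectively bad) events on the coarse lattice imply the desired percolation (respectively non-percolation) on $\Z^3$ --- is an instance of the general results of Section~3, and the probabilistic half is the Monte Carlo estimate.

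The main obstacles are design and scale. First, the box side length $L$, the shapes of the face and central regions, and the precise compatibility conditions for good events on adjacent boxes must be chosen so that the 1-dependent thresholds from Section~3 are comfortably crossed at the claimed $p$-values: if $L$ is too small the event probability fails to separate from the threshold even at the true $p_c$, while if $L$ is too large each Monte Carlo trial becomes prohibitively expensive. Second, the number of trials needed to certify the gap with failure probability at most $10^{-7}/4$ scales inversely with the square of that gap, so all four designs must be tuned simultaneously to keep the total simulation budget feasible. The remaining ingredients --- the stochastic domination invoking the rigorous 1-dependent threshold and the Chernoff or Okamoto bounds on binomial tails --- are standard and form the smaller technical part of the proof.
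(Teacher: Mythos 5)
Your overall architecture --- four one-sided Monte Carlo certificates, each reducing a claim about $p_{c}$ to a binomial tail bound on the probability of a block event, combined to give $99.9999\%$ confidence --- matches the paper, and your treatment of the upper bounds is essentially the paper's (which simply reuses the Balister--Bollob\'{a}s--Walters two-dimensional reduction on a slab of the cubic lattice, with the $0.8639$ threshold for 1-independent bond percolation on $\Z^{2}$; no new three-dimensional six-face construction or Liggett--Schonmann--Stacey domination is needed there).

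The gap is in the lower bound. You propose tiling $\Z^{3}$ by disjoint translates of $B$ and taking the bad event $E^{-}$ to be a long open path from a central region of $B$ out to its boundary; if $\Prb_{p}(E^{-})$ is small you conclude non-percolation by exponential decay on the coarse lattice. This fails as stated: an infinite open cluster can zig-zag back and forth across the boundary between tiles, or wind around a line shared by several tiles, so that it never produces a long path --- nor any path reaching the central region --- inside any single tile. This is exactly the obstruction the paper identifies (Figure~1), and overcoming it is its main new contribution. The fix is to let the blocks overlap (cubes of side $2$ centred on the cells of a unit tiling), so that every vertex lies in the \emph{centre} of some block and an infinite path is forced to produce, in infinitely many blocks, two disjoint open paths from the surface to a common central vertex (the event $E_{B}$). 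But overlapping blocks are no longer independent, so the naive first-moment count over coarse paths does not apply; the paper passes to an ``independence lattice'' $\Upsilon$ in which each block has $54$ neighbours whose states are genuinely independent of it, extracts from any infinite coarse path a minimal path of mutually independent open blocks, and proves by a transfer-matrix eigenvalue computation on minimal paths in $\Upsilon$ that a block-open probability of at most $3/100$ forces non-percolation almost surely. That explicit threshold ($3/100$ rather than the trivial $1/54$) is what makes the required simulation budget ($800$ trials, accepting at most $4$ successes) feasible. Without the overlapping-block construction and the accompanying dependent-percolation threshold, the lower-bound halves of both of your intervals are not established.
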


In Section~5 we discuss higher dimensional lattices.

\section{Rigorous confidence intervals in two dimensions}

The idea behind the method of Balister, Bollob\'{a}s and Walters in [\ref{MW}] is to reduce percolation on an infinite two-dimensional lattice, $\Lambda$, to something that can be tested on a finite number (although large) number of sites. The number of sites was too large to produce an exact result, however, they were able to gain a confidence interval via repeated simulation.

More precisely, they first reduce the problem to that of a 1-independent percolation problem on $\Z^{2}$. In 1-independent percolation we only insist that states of the bonds in two collections of edges are independent if the graph distance between every edge in the first collection and every edge in the second collection is at least one (we generally think of this condition as insisting that the states of vertex disjoint edges being independent, although we do need the stronger condition as stated above). Balister, Bollob\'{a}s and Walters, [\ref{MW}], proved the following lemma:

\begin{lem}\label{Lem1}
Given a 1-independent bond percolation measure on $\Z^{2}$ in which each bond is open with a probability at least $p_{0}=0.8639$,  the probability that the origin lies in an infinite open cluster is positive. Thus, in particular, percolation will occur.
\end{lem}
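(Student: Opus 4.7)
The plan is to carry out a Peierls-type contour argument in the dual lattice of $\Z^{2}$. First I invoke planar duality: if the origin lies in no infinite open cluster, then in the dual lattice there must exist a self-avoiding closed circuit surrounding the origin, where a dual edge is declared \emph{closed} exactly when the primal bond crossing it is closed. So it suffices to show that, when each bond is open with probability at least $p_{0}$, with positive probability no surrounding closed dual circuit exists; any configuration with no such circuit has the origin in an infinite open cluster.

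The first main step is a combinatorial count: the number of self-avoiding dual circuits of length $n$ enclosing the origin is at most $P(n)\,\mu^{n}$ for an explicit polynomial $P$ and a constant $\mu\le 3$ (from the self-avoiding walk bound that each step has at most $3$ continuations, together with the fact that the circuit must cross the positive $x$-axis, giving $O(n)$ choices of rooted starting edge). The second main step is to bound, for a fixed circuit of length $n$, the probability that all $n$ of its primal bonds are closed. Consecutive bonds share a primal vertex and so are not independent under the $1$-independent hypothesis, but one can extract a sub-collection $S$ of pairwise vertex-disjoint bonds; by $1$-independence the states of the bonds in $S$ are mutually independent, so the probability all of them are closed is at most $(1-p)^{|S|}$. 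A greedy matching argument along the sequence of bonds gives $|S|\ge n/K$ for a modest integer $K$ coming from the local geometry of $\Z^{2}$, so the expected number of closed surrounding circuits is at most
\[
\sum_{n\ge 4} P(n)\,\mu^{n}\,(1-p)^{n/K},
\]
which is strictly less than $1$ provided $p$ exceeds an explicit threshold; Markov's inequality then gives positive probability that no surrounding circuit is closed, completing the argument.

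The hard part is matching the specific constant $p_{0}=0.8639$: the naive combination above gives a threshold closer to $0.95$. To push the constant down I would refine both ingredients. On the counting side I would replace $3^{n}$ by a sharper polygon-generating-function bound using the connective constant $\mu\approx 2.638$ of $\Z^{2}$, and possibly use an isoperimetric estimate to trade circuit length for enclosed area. On the probability side the larger gain comes from smarter use of $1$-independence: rather than passing to a single independent sub-matching and losing a factor of $K$ in the exponent, one can partition the circuit's bonds into a bounded number of matchings and take a worst-case estimate, or directly compare the $1$-independent measure with a Bernoulli product via a Liggett--Schonmann--Stacey-type stochastic-domination lemma and then invoke the classical fact that $p_{c}^{\mathrm{bond}}(\Z^{2})=\tfrac{1}{2}$. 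I expect this constant-squeezing, rather than the overall Peierls framework, to be where essentially all the work lies.
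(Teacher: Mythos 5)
First, a point of orientation: this lemma is not proved in the paper at all --- it is imported verbatim from Balister, Bollob\'{a}s and Walters [\ref{MW1}]/[\ref{MW}], so there is no in-paper proof to match your attempt against. Judged on its own terms, your Peierls/dual-contour framework is the right general shape and, if executed carefully, does prove a \emph{weaker} version of the statement: there exists some $p_{0}<1$ (around $0.98$--$0.99$ with the naive count $3^{n}$ and an independent sub-family of size $n/K$ with $K\approx 4$) above which a 1-independent bond measure on $\Z^{2}$ percolates. You are also right that vertex-disjointness is exactly the hypothesis 1-independence lets you exploit, and your honesty about not reaching $0.8639$ is to your credit.

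The genuine gap is that the constant \emph{is} the content of the lemma --- the entire simulation scheme of Section~2 (block size, number of trials, achievable confidence level) is calibrated to $0.8639$, and a bound of $0.95$ or $0.99$ would not support the paper's results --- and none of your proposed refinements closes it. Specifically: (i) replacing $3^{n}$ by $\mu^{n}$ with the connective constant $\mu\approx 2.638$ is invalid for self-avoiding \emph{walks}, since submultiplicativity gives $c_{n}\geq\mu^{n}$, i.e.\ $\mu^{n}$ is a lower bound there (for self-avoiding \emph{polygons} one does get $p_{n}\leq\mu^{n}$, but then everything hinges on the exponent of $q=1-p$); (ii) the exponent $n/2$ you would need to make $\mu^{2}q<1$ land near $0.86$ is false in general --- the length-4 dual polygon around a single primal vertex crosses four primal edges that pairwise share that vertex, so its largest independent sub-family has size $1$, not $2$, and longer polygons can locally repeat this; (iii) partitioning the contour's bonds into several matchings gains nothing, because the matchings are not independent of one another, so you cannot multiply their closure probabilities --- you recover exactly the single-matching bound $q^{n/K}$; and (iv) a Liggett--Schonmann--Stacey domination of product measure at density above $1/2$ on the line graph of $\Z^{2}$ (degree $6$) requires marginals vastly closer to $1$ than $0.8639$. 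Reaching $0.8639$ requires a genuinely sharper device (in the source, a refined enumeration of blocking configurations weighted by how many independent closed edges each must contain), and that device is the missing idea here.
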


The exact reduction is as follows:

We choose a scale parameter $s>0$, and tile $\R^{2}$ with $s$ by $s$ squares. We identify these squares with $\Z^{2}$, and so label them as $S_{v}$ for $v\in \Z^{2}$. If $u$ and $v$ are two adjacent points of $\Z^{2}$, then we define the rectangle $R_{uv}=S_{u}\cup S_{v}$. Let $E_{uv}$ be an event that depends only on the state of the bonds in $R_{uv}$ and declare $uv$ to be open if $E_{uv}$ holds and closed otherwise. Since the rectangles corresponding to vertex disjoint edges of $\Z^{2}$ are disjoint this will define a 1-independent model on $\Z^{2}$.

Suppose we can choose an event $E$ and a value of $p$ such that:
\begin{enumerate}
\item An infinite open path in the resulting 1-independent model on $\Z^{2}$ implies an infinite open component in $\Lambda$.
\item The probability that $E$ holds (dependent on $p$) on any given rectangle is at least $0.8639$.
\end{enumerate}
Then Lemma~\ref{Lem1} tells us that the probability of the origin of $\Z^{2}$ being in an infinite open cluster is positive, and this implies that there exists an infinite open cluster in $\Lambda$, and $p$ must be above the percolation threshold.

In both [\ref{MW1}] and [\ref{MW}] the event used was as follows:
For each bond $uv\in\Z^{2}$, let $\Lambda_{u}$ and $\Lambda_{v}$ be the subgraphs induced by the sites in $S_{u}$ and $S_{v}$, and $\Lambda_{uv}$ be the subgraph induced by the sites in $R_{uv}$. Let $E_{uv}$ be the event that both $\Lambda_{u}$ and $\Lambda_{v}$ contain a unique largest open cluster, and that these large components are connected within $\Lambda_{uv}$.

Since $E_{uv}$ depends only on the states of the bonds in $R_{uv}$, and clearly fulfils property 1 of the above, to find an upper bound on the percolation threshold, it suffices to find a pair $(s,p)$ for which condition 2 holds.

If we could prove rigorously that condition 2 holds for some pair $(s,p)$, then we would achieve a rigorous bound on $p_{c}$. Unfortunately, for very small $s$ the bounds obtained are fairly weak, and for larger values of $s$ the event becomes complicated (e.g. all of the simulations done by Balister and Walters in [\ref{MW1}] considered blocks containing over $10^{9}$ sites). However, since $s$ is finite, we have reduced our infinite problem (on all of $\Lambda$) to one on a finite block of $\Lambda$ (and thus checkable). Now, if we generate a block of $\Lambda_{uv}$ at random, then there is some fixed probability, $p_{1}$, such that $E_{uv}$ holds. Thus, if we generate $N$ such blocks at random, the number $n$ for which $E_{uv}$ holds will be distributed binomially with parameters $N$ and $p_{1}$. Namely,
\begin{align}
\Prb(n\geq m)=\sum_{i=m}^{N}\Bin{N}{i}p_{1}(1-p_{1})^{N-i}\label{Bin1}
\end{align}

For any given confidence level $\alpha$, we wish to generate a one sided confidence interval for $p_{c}$ with confidence $\alpha$ (i.e. a random value $X$ such that $\Prb(X>p_{c})=\alpha$). We can do this by generating a confidence interval for the bond probability, $p'$, that ensures that $\Prb(E_{uv})>0.8639$ for some (fixed) scale factor $s$ as follows:

We fix a pair $(s,p)$ and run the simulation some large fixed number, $N$ say, of times, with a scale factor of $s$ and a bond probability of $p$. If we get a high number of successes (i.e. simulations where the event $E_{uv}$ occurs), say at least $m$ successes, then we set $X=p$ and otherwise set $X=1$. Then we can only have $X\leq p'$ if $p\leq p'$, in which case the success probability of a given trial must be at most $0.8639$. Thus, by (\ref{Bin1}):
\begin{align}
\Prb(X\leq p')=&\Prb(\text{We have at least }m\text{ successes}\notag\\
&\,\,\text{given that the success probability}\notag\\
&\,\,\text{is at most }0.8639)\notag\\
\leq&\sum_{i=m}^{N}\Bin{N}{i}0.8639^{i}(1-0.8639)^{N-i}\label{Bin2}
\end{align}
If we have chosen $m$ such that the right hand side of (\ref{Bin2}) is less than $1-\alpha$, then we would have that $\Prb(X>p')\geq\alpha$, and so $\Prb(X>p_{c})\geq\alpha$. Thus, making one evaluation of $X$, which is what we report, we will have a one sided confidence interval of the desired significance for $p_{c}$. Note that if we choose our value of $p$ to be too low (or are just unlucky), then we will only be able to report the trivial upper bound of $1$.

To make this confidence interval into a two sided confidence interval Balister, Bollob\'{a}s and Walters used planar duality: Defining the dual lattice $\Lambda^{*}$ of $\Lambda$ to be the lattice with a vertex corresponding to each face of $\Lambda$, and a bond $e^{*}$ for each bond $e$, which joins the two sites of $\Lambda^{*}$ in which $e$ lies. Then, for any two dimensional planar lattice with rotational symmetry at least two (Bollob\'{a}s and Riordan [\ref{OBR}]):
\begin{align}
p_{c}+p^{*}_{c}= 1\label{Planarity}
\end{align}
Thus to bound $p_{c}$ below, it is enough to bound $p_{c}^{*}$ above. In fact, (\ref{Planarity}) has been proved more generally without rotational symmetry conditions by Sheffield in [\ref{SS}], although all of the lattices studied in [\ref{MW}] met the symmetry conditions required in [\ref{OBR}].

To gain an upper bound on $p_{c}$, this method generalises to higher dimensions `as is.' We will be considering $\Lambda$ as a subset of $\R^{d}$ rather than $\R^{2}$, but Lemma~\ref{Lem1} will still hold, since an infinite open cluster on any two dimensional subset is still an infinite open cluster. We are thus looking at percolation on a slab, but since the percolation threshold of a slab is trivially at least that of the full lattice, and tends to the percolation threshold of the full lattice as the thickness of the slab increases, the upper bound we obtain will hold, and can be made as tight as required by considering large enough blocks.

Planar duality certainly does not hold in any dimension higher than two, and so this method will not give a lower bound. We examine this problem in the next section.

\section{Gaining a lower bound on the critical threshold in higher dimensions}\label{S1}

We aim to prove a lower bound on the percolation threshold of a lattice without using planar duality. Ideally we would be able to mirror the method of Balister, Bollob\'{a}s and Walters outlined in the last section, making only the following two changes:
\begin{enumerate}
\item We need to change Lemma~\ref{Lem1} to a lower bound result, i.e. something of the form `Given a 1-independent percolation measure on $\Z^{d}$ in which each bond is open with a probability $p<p_{0}$, the probability that the origin lies in an infinite open cluster is $0$.'
\item We need to find an event $E$ dependent only on a finite block of $\Lambda$ such that an infinite open cluster in $\Lambda$ will give an infinite open cluster in our new lattice over $\Z^{d}$.
\end{enumerate}

However, there is a problem with this approach: It is possible (if unlikely) that we could have an infinite component in $\Lambda$ which only ever had very short paths in any given block (e.g. suppose we had an infinite path which zig-zagged back and forth across the boundary between two blocks (See Figure~1 for an illustration of this), or worse, in higher dimensions, we could have a path which circled around a line which was on the boundary of more than two blocks, so that it would create no long path even in any set of two adjacent blocks). This makes it difficult to come up with an event which is dependent only on the states of bonds contained entirely within a given block but still meets condition 2 in the above.

\begin{figure}[ht]
\centering
\includegraphics[width=80mm]{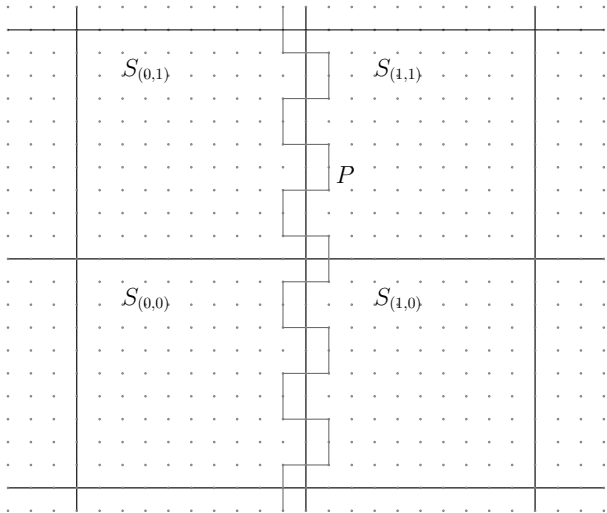}
\caption{A section of an infinite path through four blocks that creates no section of long path in any given block.}
\end{figure}

It is worth noting that there are different ways of gaining lower bounds: Given a vertex $x\in\Lambda$, we define $B_{r}(x)$ to be the set of vertices within a graph distance $r$ of $x$ (in $\Lambda$), $S_{r}(x)$ to be those sites that are exactly a graph distance $r$ from $x$ and $N_{r}(x)$ to be the number of sites in $S_{r}(x)$ that can be reached by open paths within $B_{r}(x)$. Then a simple expectation argument shows that if $\mathbb{E}_{p}[N_{r}(x)]<1$ for any $r>0$ then we must have $p<p_{c}$. Whilst it is possible to estimate the value of $\mathbb{E}_{p}[N_{r}(x)]$ and still be able to give rigorous error bounds (and thus gain a similar confidence style result), the bounds will tend to be quite weak, in part because it is hard to bound the expectation of a random variable that can take very large values occasionally.

Instead, we present a way around the problems with extending the method of Balister, Bollob\'{a}s and Walters to higher dimensions: We are going to allow our blocks to overlap. In fact, Bollob\'{a}s and Riordan had the seed of this method outlined in Chapter~4 of [\ref{BR}], but did not prove a version of Lemma~\ref{Lem1} that was independent of the block size taken. 

Allowing the blocks to overlap has two consequences:
\begin{itemize}
\item The state of edges within blocks near one another will no longer be independent from each other.
\item The resulting graph on the blocks may no longer be a simple copy of $\Z^{d}$ (depending on how we let them overlap).
\end{itemize}
Both of these points make it harder to get a version of Lemma~\ref{Lem1} with a good lower bound (i.e. a value of $p_{0}$ which isn't too low). This is important for two reasons; firstly a weaker upper bound will mean we have to take a larger block size (i.e. scale factor) to ensure that the probability of our event occurring is low enough, and secondly we will have to do more trials to gain a given level of confidence. If our bound becomes too low then the computing power required to gain a meaningful result would become impractical. However, in three dimensions, everything is still manageable.

For simplicity, from now on we will assume our original lattice $\Lambda$ is a copy of the simple cubic lattice on $\Z^{3}$ and concentrate on bond percolation, although everything that follows works for site percolation too, as well as a large class of three dimensional lattices, and indeed higher dimensional lattices (See Section~\ref{SecHD}).

We define the probability measure $\Prb_{p}$ on $\Lambda$ by taking every edge of $\Lambda$ to be open independently with probability $p$ and closed otherwise. We will allow our blocks to overlap as follows:

\begin{defn*}
We tile $\R^{3}$ by disjoint copies of the unit cube. We then associate with each unit cube a cube of side length $2$ centred on the same point and with the same orientation (i.e. the faces of the two cubes are parallel), and call each cube of side length $2$ a \emph{block}.

For a given block, we define the \emph{centre} of that block to be the original unit cube centred on the same point, and say that two blocks are adjacent if their centres share a face. We call the resulting structure the \emph{block lattice} and denote it $\beta$. Note that as a graph with these adjacency rules $\beta$ is isomorphic to $\Z^{3}$.

We define a \emph{path of blocks} to be a sequence of distinct adjacent blocks, $B_{0}$, $B_{1}\ldots$, and call such a path \emph{minimal} if $B_{i}$ and $B_{j}$ are adjacent if and only if $|i-j|=1$.
\end{defn*}

We wish to consider $\Lambda$ as being contained within the blocks: Given any positive integer $n$ (which we will assume is a multiple of four), we embed $\Lambda$ into $\R^{3}$ such that: Adjacent vertices of $\Lambda$ are a distance $\frac{1}{n}$ apart, the orientation of $\Lambda$ lines up with the orientation of the block lattice, and the nearest vertices of $\Lambda$ to the boundary of each block are all a distance $\frac{1}{2n}$ from the boundary of that block. This ensures that every vertex of $\Lambda$ is inside the centre of a unique block, and no vertex is on the boundary of any block.

\begin{defn*}
We say that a vertex $v\in \Lambda$ is on the surface of a block if it is within that block, but there is a vertex $w\in\Lambda$ adjacent to $v$ which is not.
\end{defn*}

We can now define our event on a block as follows:

\begin{defn*}
Given a block $B$, we define $E_{B}$ to be the event that there exists an open path (in $\Lambda$) from a vertex on the surface of the block to any vertex $v$ contained within the centre of the block, and then another (vertex and edge) disjoint open path from $v$ to another vertex in the surface of the block.

Given a scale factor $n$ and a measure $\Prb_{p}$ on $\Lambda$, we define the probability measure $\tilde{\Prb}_{p}$ on $\beta$ by declaring any block $B$ to be open if $E_{B}$ occurs and closed otherwise.
\end{defn*}

\begin{prop}\label{Pr1}
The existence of an infinite open path, $P$, in $\Lambda$ implies the existence of an infinite minimal open path in the block lattice, $\beta$.
\end{prop}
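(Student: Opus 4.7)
My plan is to project the path $P$ onto the block lattice $\beta$ via the centre in which each vertex lies, observe that every block visited in this way is open, and then extract an infinite minimal path from the resulting infinite walk in $\beta$.

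To set up the projection, note that each vertex of $\Lambda$ lies in the centre of a unique block, and since adjacent vertices of $\Lambda$ are at distance $1/n<1$ apart and the centres are unit cubes tiling $\R^{3}$, two adjacent vertices of $\Lambda$ lie either in the same centre or in centres sharing a face, i.e.\ in the same block or in adjacent blocks of $\beta$. Write $V\subseteq\beta$ for the set of blocks whose centre is visited by $P$; the sequence of blocks visited (after collapsing consecutive repetitions) is then an infinite walk in $V$ with consecutive entries adjacent in $\beta$. It is infinite because $P$ is self-avoiding and each centre contains only finitely many lattice vertices, so $V$ itself is infinite and induces a connected subgraph of $\beta$.

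Next I would show that every $B\in V$ other than perhaps the starting block $B^{*}$ (containing $v_{0}$) satisfies $E_{B}$. Let $v$ be the first vertex of $P$ lying inside the centre of $B$. Since $v_{0}\notin B$, there is a last time before $v$ at which $P$ is outside $B$, and the immediately following vertex $u$ of $P$ lies on the surface of $B$. Since $B$ contains finitely many lattice vertices and $P$ is self-avoiding, $P$ eventually leaves $B$ permanently, so there is a first time after $v$ at which $P$ crosses out of $B$, and the vertex $w$ of $P$ immediately before this transition also lies on the surface of $B$. The two subpaths of $P$ from $u$ to $v$ and from $v$ to $w$ lie entirely inside $B$, consist of open edges, and are vertex- and edge-disjoint apart from $v$ because $P$ is self-avoiding; hence $E_{B}$ holds.

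Finally I would extract an infinite minimal open path. Since $V\setminus\{B^{*}\}$ is infinite and (as $B^{*}$ has at most six neighbours in $\beta$) breaks into at most six components, at least one component $V'$ is infinite, and every block in $V'$ is open. Rooting a breadth-first search tree in $V'$ (treated as an induced subgraph of $\beta$) at any $B_{0}\in V'$ gives an infinite, locally finite tree, so by K\"onig's Lemma there is an infinite ray $B_{0},B_{1},B_{2},\dots$ in which $B_{k}$ is at graph distance exactly $k$ from $B_{0}$ in $V'$. Any adjacency $B_{i}\sim B_{j}$ with $i<j$ would then force $j\leq i+1$, so the ray is an induced path in $V'$, hence minimal in $\beta$; as every $B_{k}$ is open, this is the required infinite minimal open path. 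The main obstacle I anticipate is the middle step: the event $E_{B}$ requires not merely that arcs of $P$ reach the surface and the centre, but that the two arcs meeting at $v$ are genuinely vertex- and edge-disjoint, forcing a careful choice of the entry vertex $u$ and exit vertex $w$ together with the self-avoiding property of $P$; the subsequent extraction is then almost automatic, since BFS distances forbid chords along the resulting ray.
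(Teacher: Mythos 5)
Your proof follows the same route as the paper's: project $P$ onto the blocks whose centres it visits, argue that each such block is open because the self-avoiding path supplies two arcs from a centre vertex to the surface that are disjoint except at that vertex, and then extract an infinite minimal path from the resulting infinite connected family of open blocks. Your treatment of the last step (a BFS tree plus K\"onig's Lemma, with the observation that BFS distances forbid chords along the ray) is in fact a cleaner justification of minimality than the paper's procedure of ``deleting all blocks between the first appearance of a block and the last appearance of any of its neighbours.''

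There is one slip in the middle step. You exclude only the block $B^{*}$ whose \emph{centre} contains $v_{0}$, and then assert ``since $v_{0}\notin B$'' for every other $B\in V$. But the blocks overlap: $v_{0}$ lies inside up to eight blocks, only one of which has $v_{0}$ in its centre. For a block $B$ that contains $v_{0}$ outside its centre, there need be no time before $v$ at which $P$ is outside $B$, so your entry vertex $u$ may not exist; indeed $E_{B}$ can genuinely fail for such a block (take the only open edges to be those of $P$, and let $P$ run from $v_{0}$ into the centre of $B$ and out the far side without touching the surface on the way in). The repair is immediate: discard the finitely many blocks of $V$ that contain $v_{0}$ anywhere, rather than just $B^{*}$; since $V$ is infinite and connected and you delete only finitely many vertices, an infinite component consisting entirely of open blocks survives, and the rest of your extraction argument goes through unchanged. (The paper's own proof is similarly terse at this point, excluding ``possibly the first two blocks,'' but it does at least acknowledge that more than one initial block may fail.)
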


\begin{proof}
Every vertex in an infinite open path in $\Lambda$ is inside the centre of a unique block. Since there are only finitely many points inside each block and the path is infinite, there must be two paths that are edge and vertex disjoint (outside of the centre) to vertices in the surface of the block (except for possibly the first two blocks whose centres contain points in $P$ if $P$ is only singly infinite). Thus every block in $\beta$ which has the infinite path in $\Lambda$ running through its centre (except for possible the first two) is open.

If an infinite path has a vertex in the centre of a block, it must also have a vertex in the centre of at least one adjacent block (e.g. the first vertex not in the centre of the current block). Thus there must be an infinite walk $B_{1}$, $B_{2}\ldots$ of open blocks, and each block can only appear finitely many times in this walk (since each grid block has only a finite number of vertices).

Now, any such infinite walk will contain a minimal infinite path by deleting all blocks between the first appearance of a block and the last appearance of any of it's neighbours over all blocks sequentially.
\end{proof}

It is worth noting that whilst we were considering a \emph{bond} percolation model on $\Lambda$, it is a \emph{site} percolation model that we consider on $\beta$.

We wish to prove a lower bound on the critical threshold of $\beta$ under $\tilde{\Prb}$, and to do this it is convenient to consider only blocks which are independent of each other. To this end we wish to construct a lattice on $\beta$ in which adjacent blocks are independent of each other:

Note that two blocks are open independently of each other if they do not overlap, which is the case if and only if their centres (as subsets of $\R^{3}$) share no faces, edges or corners. So, given an infinite minimal path, $P\in\beta$, and $B_{0}\in P$, the next block on $P$ which is open independently of $B_{0}$ must be one of the the blocks adjacent (in $\beta$) to the cube of blocks whose centres form a cube of side length 3 centred on the centre of $B_{0}$. This gives a total of 54 possible blocks (9 per face of the cube of side length 3).

\begin{defn*} We define the \emph{independence lattice}, denoted $\Upsilon$, to be the lattice on the same vertex set as $\beta$, but with edge set, from any given block $B$, exactly the 54 edges to the blocks which are adjacent to one of the blocks whose centre surrounds the centre of $B$.
\end{defn*}

\begin{prop}\label{Pr2}
The existence of an infinite open path in $\beta$ implies the existence of an infinite minimal open path in $\Upsilon$ such that every block in this path is open independently from all other blocks.
\end{prop}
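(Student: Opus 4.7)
My plan is to extract from the infinite open $\beta$-path $B_0,B_1,\ldots$ (whose blocks are necessarily distinct, since it is a path) an infinite sequence of open blocks that are consecutively $\Upsilon$-adjacent and pairwise at $L^\infty$-distance at least $2$, and then prune that sequence to a minimal path using the same shortcutting idea that finishes the proof of Proposition~\ref{Pr1}. The key preliminary fact is that, because $\beta$ is locally finite and the $B_i$ are distinct, the path must eventually leave any prescribed finite subset of $\beta$ forever.

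The first step is a greedy extraction: set $A_0=B_0$ and inductively let $i_{k+1}$ be the smallest index $>i_k$ such that $B_{i_{k+1}}$ is at $L^\infty$-distance at least $2$ from every previously chosen $A_j$, and put $A_{k+1}=B_{i_{k+1}}$. The escape-to-infinity observation applied to the (finite) union of $L^\infty$-balls of radius $1$ about the $A_j$ already chosen ensures $i_{k+1}$ exists, so the sequence is infinite; every $A_k$ is open, and all pairs have $L^\infty$-distance $\ge 2$, so they depend on disjoint sets of bonds and are mutually independent under $\tilde{\Prb}_p$.

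Next I would show that each $A_{k+1}$ is $\Upsilon$-adjacent to some earlier $A_{j^*}$. By minimality of $i_{k+1}$, the vertex $B_{i_{k+1}-1}$ is at $L^\infty$-distance $\le 1$ from at least one $A_{j^*}$ with $j^*\le k$, and satisfies $B_{i_{k+1}}=B_{i_{k+1}-1}\pm e_\ell$ for some coordinate axis $\ell$. The difference $B_{i_{k+1}}-A_{j^*}$ therefore has $L^\infty$-norm exactly $2$, with its $\ell$-th coordinate equal to $\pm 2$ and the others still bounded by $1$ in absolute value -- this is precisely a $\Upsilon$-displacement. Declaring such a $j^*$ to be the parent of $A_{k+1}$ turns $\{A_k\}_{k\ge 0}$ into a rooted tree $T$ with root $A_0$, infinitely many vertices, and out-degree bounded by the $\Upsilon$-degree $54$; by K\"onig's lemma $T$ contains an infinite simple path from the root. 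Feeding this walk into the shortcutting procedure of Proposition~\ref{Pr1} -- at each stage replace the current successor by the last later term in the walk that is $\Upsilon$-adjacent to the current term (well-defined, since only finitely many terms can lie in the $54$-neighbourhood of a given block) -- produces a minimal infinite open path in $\Upsilon$; because it is a sub-sequence of the $A_k$'s, the pairwise $L^\infty$-separation $\ge 2$ carries over and yields the required mutual independence.

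The main obstacle is the simultaneous demand that consecutive blocks be $\Upsilon$-adjacent and that \emph{all} pairs of blocks be $L^\infty$-separated: one cannot simply insist $A_{k+1}$ be $\Upsilon$-adjacent to $A_k$ specifically, because the $\beta$-path may only leave a neighbourhood of $A_k$ by passing close to a much earlier $A_j$, and nothing in the hypotheses controls where it exits. The tree-plus-K\"onig detour is what reconciles these two constraints, by letting the next $\Upsilon$-edge attach to whichever ancestor the geometry permits; the subsequent shortcutting then cleans up minimality without damaging the independence.
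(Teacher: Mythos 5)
Your argument is correct, but it is organised quite differently from the paper's. The paper first extracts a minimal $\beta$-path $P_1$ and then builds the $\Upsilon$-path in a single pass: from the current block it jumps to the \emph{last} block of $P_1$ that is $\Upsilon$-adjacent to it. Pairwise separation of all (not just consecutive) chosen blocks then follows from the maximality of each jump: if a later chosen block re-entered the $L^\infty$-ball of radius $1$ around an earlier one, the path would subsequently have to exit that ball again, and its first exit would be a $\Upsilon$-neighbour of the earlier block occurring after the supposedly last one. Your proof instead decouples the two constraints: you first greedily harvest an infinite family of open blocks that are pairwise $L^\infty$-separated (via first-exit times), then observe that each new block is forced to be a $\Upsilon$-neighbour of \emph{some} earlier one, recover an infinite $\Upsilon$-path from the resulting locally finite tree by K\"onig's lemma, and finish with the same shortcutting step. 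Your route is longer but makes explicit two things the paper leaves implicit: the geometric fact that the first step taking the path to $L^\infty$-distance $2$ from a block is exactly a $\Upsilon$-displacement (one coordinate $\pm 2$, the rest at most $1$), and the reason one cannot simply demand that each new block be adjacent to the immediately preceding one. The paper's route buys brevity and avoids the compactness detour, at the price of a separation claim (``since we picked the last possible block each time'') whose justification is left to the reader. Both arguments conclude with the identical independence observation, that $L^\infty$-separation of the centres by at least $2$ forces the blocks to depend on disjoint sets of bonds.
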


\begin{proof}
Given an infinite open path $P_{0}\in\beta$, we can find an infinite minimal path, $P_{1}\in\beta$ as a subset of $P_{0}$. We induce an infinite path, $P_{2}$, in $\Upsilon$ by fixing a block $B_{0}\in P_{1}$ and then taking the last block in $P_{1}$ adjacent to this in $\Upsilon$ to be the next block in $P_{2}$. Since we picked the last possible block each time in $P_{1}$, the centres of the blocks in $P_{2}$ can never share a face, edge of corner, and so must be mutually independent. Finally $P_{2}$ contains an infinite minimal path by deleting all blocks between the first appearance of a block and the last appearance of any of it's neighbours over all blocks sequentially.
\end{proof}

Using the above proposition, we can now bound the critical threshold in $\beta$ under $\tilde{\Prb}$. It would be trivial to gain a lower bound of $\frac{1}{54}$, since if the probability of a block being open is less than $\frac{1}{54}$, then the expected number of paths with $n$ independent blocks tends to zero, and so the probability of having such a path must also tend to zero. However, we are trying to prove as strong a bound as possible, since it will give us a higher level of confidence not only for this lattice, but for other three dimensional lattices too. We thus strengthen this result to the following:

\begin{thm}\label{T1}
If $p'$ is the probability of a block being open in $\beta$, and $p'\leq\frac{3}{100}$, then the probability of having an infinite open cluster in $\beta$ is $0$.
\end{thm}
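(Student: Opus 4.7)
My plan is to apply a first-moment path-counting argument in the independence lattice $\Upsilon$. By Proposition~\ref{Pr2}, an infinite open cluster in $\beta$ forces the existence of an infinite minimal open path in $\Upsilon$ all of whose blocks are mutually independent and each open with probability at most $p'$. It therefore suffices to show that, under $p'\le 3/100$, no such infinite minimal open path exists almost surely.

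Let $M(n)$ denote the number of minimal paths of length $n$ in $\Upsilon$ starting at a fixed block $B_0$. Because the blocks along any such minimal path are pairwise independent, the expected number of open minimal paths of length $n$ from $B_0$ is exactly $M(n)\cdot (p')^n$. If I can show this quantity tends to $0$ as $n\to\infty$, then by Markov's inequality the probability that $B_0$ is the start of an infinite minimal open path is $0$; translation invariance and countable additivity then give that almost surely no infinite minimal open path exists anywhere, completing the proof.

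The main task is thus to bound $M(n)$. Since $p'\le 3/100$, it suffices to prove $M(n)\le C\rho^n$ for some $\rho$ with $\rho\cdot p'<1$; as $1/p'\ge 100/3$, it is enough to obtain $\rho\le 33$. The naive bound from the maximum degree $54$ of $\Upsilon$ gives only $M(n)\le 54\cdot 53^{n-1}$, which is far too weak. The essential saving comes from minimality: $B_{i+1}$ must be a $\Upsilon$-neighbour of $B_i$ that is \emph{not} a $\Upsilon$-neighbour of $B_{i-1}$ and distinct from $B_{i-1}$. Accordingly I would set
\[
k \;=\; \max_{B,B' \text{ adjacent in }\Upsilon}\; \bigl|\{\,B'' : B''\sim_{\Upsilon} B',\ B''\not\sim_{\Upsilon} B,\ B''\neq B\,\}\bigr|,
\]
so that $M(n)\le 54\cdot k^{n-1}$ by induction on $n$.

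The hard part will be the explicit geometric enumeration needed to verify $k\le 33$. This is a finite case-check on the $54$-vertex $\Upsilon$-neighbourhood of a single block: one classifies the possible positions of $B$ relative to $B'$ up to the symmetries of the cubic lattice (the $54$ positions split into a small number of orbits corresponding to face, edge and corner ``offsets'' of the central $3\times 3\times 3$ cube), and for each orbit counts how many of $B'$'s $54$ $\Upsilon$-neighbours also lie in the closed $\Upsilon$-neighbourhood of $B$. The worst case, maximising the complement count $k$, should occur when the overlap between the two neighbourhoods is as small as possible, typically when $B$ sits at an extremal ``corner'' offset of $B'$'s neighbourhood. The enumeration is mechanical but delicate because $\Upsilon$ is neither vertex-transitive in a trivial way nor symmetric under all directions relative to a chosen edge. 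Should the two-step bound turn out to be on the borderline, one can sharpen it by taking the ``state'' to be the last two or three blocks and bounding the Perron--Frobenius eigenvalue of the resulting transfer matrix, but in either formulation the remaining computation is a finite deterministic check.
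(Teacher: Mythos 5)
Your overall strategy---reduce via Proposition~\ref{Pr2} to a first-moment count of minimal paths in $\Upsilon$ whose blocks are mutually independent, then show the number of such paths grows more slowly than $(1/p')^{n}$---is exactly the paper's. The gap is quantitative: the primary route you propose, the memoryless bound $M(n)\le 54\,k^{n-1}$ with $k\le 33$, fails because $k$ is in fact $45$. Coordinatise blocks by $\Z^{3}$, so the $54$ $\Upsilon$-neighbours of the origin are the points with exactly one coordinate equal to $\pm2$ and the other two in $\{-1,0,1\}$; up to symmetry an adjacent pair $(B_{0},B_{1})$ has $B_{1}\in\{(2,0,0),(2,1,0),(2,1,1)\}$ (the three types of the paper). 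Counting common $\Upsilon$-neighbours of $B_{0}=(0,0,0)$ and $B_{1}$ gives $12$, $8$ and $10$ respectively, so the number of admissible choices for $B_{2}$ (a neighbour of $B_{1}$ that is neither $B_{0}$ nor a neighbour of $B_{0}$) is $54-12-1=41$, $54-8-1=45$ and $54-10-1=43$. Even the best of these satisfies $41\cdot\tfrac{3}{100}>1$, so the expected number of open minimal paths obtained from your one-step bound does not tend to zero, and the case-check you defer to would refute rather than verify $k\le 33$.

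Your closing remark about passing to a transfer matrix with longer memory is the correct repair and is essentially what the paper does, but you underestimate how far one must go: the paper keeps the \emph{type} of the last adjacent pair as the state and counts minimal continuations of length \emph{six}, obtaining a $3\times 3$ matrix $\textbf{M}_{6}$ with entries around $10^{8}$--$10^{9}$ whose largest eigenvalue has sixth root $33.244\ldots<100/3$ (verified rigorously via sign changes of the characteristic polynomial at $0$, $250000$ and $(100/3)^{6}$). The saving from minimality only drops the growth rate below $100/3$ after this length-six enumeration, a computation on the order of $10^{9}$ paths, not after a two- or three-block memory. A minor further point: for the product $(p')^{n}$ you need the blocks on the path to be \emph{mutually} independent, not merely pairwise independent; this is what Proposition~\ref{Pr2} actually supplies, and the union bound should run over all minimal paths of $\Upsilon$ while the probability estimate is applied only to those with mutually independent blocks.
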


\begin{proof}
By Proposition~\ref{Pr2}, if we do not have any infinite minimal paths of blocks in $\Upsilon$ such that each block is independent from the others, we cannot have an infinite path in $\beta$, and so the probability of having an infinite open cluster in $\beta$ is $0$.

Now, we want to bound the number of such paths in $\Upsilon$ of a given length. Suppose we take a pair of adjacent vertices in $\Upsilon$ and consider how many minimal paths of a given length there are starting from this pair. This number can vary depending on the relationship between the two vertices in $\Upsilon$, and there are 3 distinct types of adjacent pair (the neighbours of any block, $B\in\Upsilon$, are the blocks that have their centres sharing a face with the $3\times 3$ cube centred on the centre of $B$ (denote this $3\times$ cube $C$), and each face of this cube is equivalent. Thus there are three sorts of neighbours, i.e. the neighbours can have their centre sharing a face with either the centre, edge or corner of a face of $C$).

Let the $3\times 3$ matrix $\textbf{M}_{k}$ have $(i,j)^{th}$ entry equal to the number of paths of length $k$ which start from a pair of neighbours of type $i$ and end in a pair of type $j$. Let $\textbf{n}_{n}$ be the $3$-vector of the number of paths of length $n$ ending in an edge of each type. We have:
\begin{equation}
\textbf{n}_{n+k} \leq \textbf{M}_{k} \textbf{n}_{n}
\end{equation}
We can use this to gain a recurrence relation bounding the number of paths of a given length ending in edges of each type. The solutions to these recurrence equations over paths of length $mk + c$ for $c \in \{1,2...k\}$ will be of the form;
\begin{equation}
\textbf{n}_{mk+c} \leq \sum_{i=1}^{3}{\textbf{c}_{i}e_{i}^{m}}
\end{equation}
Where the $e_{i}$ are the eigenvalues of $\textbf{M}_{k}$, and the $\textbf{c}_{i}$ are constant $3$-vectors dependent only on $c$.
Thus there exists a constant $C$, such that the total number of paths of length $n$ will be bounded by $C\cdot (E^{1/k})^{n}$, where $E$ is the largest eigenvalue of $\textbf{M}_{k}$.

Calculating $M_{k}$ for $k=6$, we obtain the following:
\[M_{6}=\left(\begin{array}{ccc}
139068488 & 147798994 & 145131436\\
708801255 & 754445397 & 740361638\\
438727951 & 465222047 & 455921413
\end{array}\right)\]

This has characteristic equation:
\begin{align}
f(\lambda) & = \lambda^{3}-1349435298\lambda^{2}-574193103868851\lambda\notag\\
&\, +212282708057868352770
\end{align}

Mathematica tells us that $f(\lambda)$ has it's largest root at $E=1.349860\ldots\times 10^{9}$ which has $6^{th}$ root equal to $33.244\ldots$ Thus if $p' <\frac{1}{33.244\ldots}=0.03008\ldots$, then we have that, for some constant $C$:
\begin{align}
 \mathbb{P}(\exists \textrm{ a path of length}\geq k)& \leq C\cdot (Ep')^{k}\notag\\
 & \rightarrow 0\textrm{ as }k\rightarrow \infty
\end{align}

To be fully rigorous, we show explicitly that $(100/3)^{6}$ is greater than the largest eigenvalue, and so that if $p' <3/100<1/33.244\ldots$ then we will have no such path in $\Upsilon$ with probability~$1$. We have:
\begin{align}
f(0) & = 212282708057868352770>0\notag
\end{align}
\begin{align}
f(250000) & = -15589649034344397230<0\notag
\end{align}
And:
\begin{align}
f\left(\left(\frac{100}{3}\right)^{6}\right) & = \frac{1.59\ldots\times 10^{34}}{387420489} > 0\notag
\end{align}
Thus $f(\lambda)$ has three real roots and it's largest is strictly less than $(100/3)^{6}$.
\end{proof}

\begin{cor}
Given a pair $(n,p)$, if the probability of a block in $\beta$ being open under $\tilde{\Prb}_{p}$ is less than $3/100$ then $p<p_{c}$, the critical threshold for $\Lambda$.
\end{cor}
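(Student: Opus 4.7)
The corollary should follow almost immediately by chaining together Proposition~\ref{Pr1}, Proposition~\ref{Pr2}, and Theorem~\ref{T1} via their contrapositives, so my plan is essentially to assemble the three ingredients in the right order and then clean up the strictness of the inequality at the end.

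First I would argue by contradiction (or contrapositive): suppose that the block-open probability under $\tilde{\Prb}_p$ is less than $3/100$, but an infinite open cluster exists in $\Lambda$ with positive probability. By Proposition~\ref{Pr1} this yields an infinite minimal open path in the block lattice $\beta$ with positive probability, and hence an infinite open cluster in $\beta$ under $\tilde{\Prb}_p$ with positive probability. Proposition~\ref{Pr2} then converts this into an infinite minimal open path in $\Upsilon$ along which the blocks are mutually independent. But Theorem~\ref{T1} tells us that whenever the block-open probability is at most $3/100$, any such independent minimal path in $\Upsilon$ fails to be infinite almost surely, a direct contradiction.

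This argument gives $p \leq p_c$, whereas the corollary asserts the strict inequality $p < p_c$. The cleanest way to close this gap is to note that the event $E_B$ depends only on the states of the (finitely many) edges of $\Lambda$ lying inside a fixed block $B$, so $\tilde{\Prb}_p(E_B)$ is a polynomial in $p$, and in particular continuous and monotone non-decreasing in $p$. Hence if $\tilde{\Prb}_p(E_B) < 3/100$ strictly, the same strict inequality holds throughout some open interval $(p - \epsilon, p + \epsilon)$; applying the previous paragraph at a value $p' \in (p, p+\epsilon)$ yields no percolation in $\Lambda$ at $p'$, so $p_c \geq p' > p$.

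The only place requiring any real thought is this final step upgrading $\leq$ to $<$, but once one observes that the block probability is a polynomial in $p$ the rest is routine; the bulk of the work is already absorbed into Theorem~\ref{T1} and the two preceding propositions. I expect no further obstacles.
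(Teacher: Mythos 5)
Your proof is correct and follows the route the paper intends: the corollary is stated there without proof as an immediate consequence of Proposition~\ref{Pr1} and Theorem~\ref{T1} (Proposition~\ref{Pr2} is already absorbed into the proof of Theorem~\ref{T1}, so invoking it separately is harmless but redundant). Your observation that $\tilde{\Prb}_{p}(E_{B})$ is a polynomial in $p$, used to upgrade $p\leq p_{c}$ to the strict inequality $p<p_{c}$, fills in a detail the paper silently elides and is a worthwhile addition.
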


\section{Results}
We wrote two programs, one to to test our lower bound event, and one to test our upper bound event (using the upper bound event of Balister, Bollob\'{a}s and Walters of [\ref{MW1}] described in Section~2) for both bond and site percolation on the simple cubic lattice. For the lower bound cases, we ran 800 trials for each lattice, noting that if we write $\phi(p)$ for the probability of a block being open for our lower bound event, and $X$ for the number of open blocks in our trial, we have:
\begin{align}
\Prb(&X\leq 4|\phi(p)>0.03)\notag\\
& \leq \sum_{i=0}^{4}\Bin{800}{i}\cdot0.03^{i}(1-0.03)^{800-i}\notag\\
& = 4.796\ldots\times10^{-7}\notag\\&<5\times10^{-7}\notag
\end{align}
For the upper bound cases we ran 400 trials for each lattice (since our bound for dependent percolation in the upper bound event (i.e. $0.8639$) is further from~$1$ than our bound in the lower bound case (i.e. $0.03$) is from~$0$, we need to do far less trials to achieve a similar level of confidence). Writing $\psi(p)$ for the probability of an edge being open for our upper bound event, and $Y$ for the number of edges found to be open, we have:
\begin{align}
\Prb(&Y\geq 378|\psi(p)<0.8639)\notag\\
& \leq \sum_{i=378}^{400}\Bin{400}{i}\cdot0.8639^{i}(1-0.8639)^{400-i}\notag\\
& \leq 1.1489\ldots\times10^{-7}\notag\\&<5\times10^{-7}\notag
\end{align}

Thus, if we have at most four trials in our lower bound simulations finding an open block, and at least 378 upper bound simulations finding an open edge, then we can say with $99.9999\%$ confidence that $p_{c}$ is above the value of $p$ used in our lower bound simulations, and below that used for our upper bound simulations.

To decide on the parameters to use, we ran our programs several times on varying different sizes of lattice and bond probabilities. We also used high precision heuristic estimates, such as in [\ref{Z}], as a guide (this claims that the percolation threshold for bond percolation on the simple cubic lattice is $0.2488126\pm0.0000005$). None of this affects our statistics so long as we do only one final run.

To generate the random numbers used to make instances of the lattices we used an SIMD-orientated fast Mersenne Twister (dSFMT ver 2.1 [\ref{Twist}]), and seeding each new trial with a new seed. To ensure that we could only do one final run, we decided the final trials would use consecutive seeds starting from:
\begin{itemize}
\item For bond percolation lower bound: $12345$.
\item For bond percolation upper bound: $123456$.
\item For site percolation lower bound: $1234567$.
\item For site percolation upper bound: $12345678$.
\end{itemize}

All pre-trial simulations would use seeds that wouldn't overlap with any of those above.

The final parameters used are listed in Table~\ref{table1}.

\begin{table}
\caption{The lattice sizes and bond and site probabilities used in our simulations}
\label{table1}
\begin{tabular}{llll}
\hline\noalign{\smallskip}
Percolation Type & Bound & Block Size & Bond/Site Probability\\
\noalign{\smallskip}\hline\noalign{\smallskip}
Bond & Lower & $6500^{3}$ & $0.2485$\\
Bond & Upper & $3000^{3}$ & $0.2490$\\
Site & Lower & $5000^{3}$ & $0.3110$\\
Site & Upper & $3000^{3}$ & $0.3118$\\
\noalign{\smallskip}\hline
\end{tabular}
\end{table}

The number of seeds which met our percolations events are listed in Table~\ref{table2}.

\begin{table}
\caption{The number of trials that met our percolation event for each simulation}
\label{table2}
\begin{tabular}{lll}
\hline\noalign{\smallskip}
Percolation Type & Bound & Trials meeting Percolation Event\\
\noalign{\smallskip}\hline\noalign{\smallskip}
Bond & Lower & 4\\
Bond & Upper & 400\\
Site & Lower & 4\\
Site & Upper & 397\\
\noalign{\smallskip}\hline
\end{tabular}
\end{table}

The seeds that met the percolation event in the lower bound cases, and did not meet the percolation event in the upper bound event (call these the \emph{bad seeds}) are listed in Table~\ref{table3}.

\begin{table}
\caption{The number of bad seeds for the random number generator for each simulation}
\label{table3}
\begin{tabular}{lll}
\hline\noalign{\smallskip}
Percolation Type & Bound & Bad Seeds\\
\noalign{\smallskip}\hline\noalign{\smallskip}
Bond & Lower & 12455, 13084, 13181, 13241\\
Bond & Upper & -\\
Site & Lower & 1235150, 1235186, 1235236, 1235274\\
Site & Upper & 12345787, 12345841, 12345991\\
\noalign{\smallskip}\hline
\end{tabular}
\end{table}

Thus we can conclude that with $99.9999\%$ confidence the percolation thresholds on the simple cubic lattice lie between $0.2485$ and $0.2490$ for bond percolation and between $0.3110$ and $0.3118$ for site percolation.

\section{Higher Dimensions}\label{SecHD}
As already mentioned, the method of Balister, Bollob\'{a}s and Walters generalises to any number of dimensions to give a confidence upper bound for the critical threshold. The method outlined in Section~\ref{S1} will in fact generalise to give a confidence lower bound:

If we change $\beta$ so that two blocks are adjacent if their centres share a face, edge or corner, then Proposition~\ref{Pr1} will hold for any $d$-dimensional lattice. We can then define $\Upsilon$ exactly as we did in Section~\ref{S1}, except now each block in $\Upsilon$ will have $5^{d}-3^{d}$ neighbours, and Proposition~\ref{Pr2} will hold by the same proof (if $\Lambda$ is such that it can be mapped onto a copy of $\Z^{d}$ in such a way that no edges go between blocks whose centres don't share a face, this can be reduced to $2d3^{d-1}$. This holds for every Archimidean lattice other than the triangular lattice in two dimensions, and so any lattice made by stacking Archimidean lattices (such as graphite). It also holds for the $\Z^{d}$ for any $d$).

A version of Theorem~\ref{T1} will then apply, but the bound will be exponentially small in $d$, and so the eventual confidence bound would be much weaker. However, since the probability of a component having a diameter at least $d$ decreases exponentially in $d$, if the bond probability, $p$, is below $p_{c}$, the probability of $E_{B}$ occurring will tend to $0$ as the block size increases, and so it would still be possible to achieve a confidence lower bound on $p_{c}$.

It would however, require exponentially more computing power to obtain similarly tight results, and so in higher dimensions, the level of confidence would inevitably be lower. E.g. in four dimensions each vertex in $\Upsilon$ has degree $216$, so we would need to show that the probability of a block of $\Upsilon$ being open is less than this. This could be done, but would take a large number of simulations. However in five dimensions each vertex would have degree $810$, at which point the computing power required becomes somewhat unrealistic (even if we had no bad seeds, we would still have to run at least 3682 trials to get a confidence level of $99\%$).




\end{document}